\definecolor{linkcolor}{RGB}{83,83,182}
\newtheorem{theorem}{Theorem}
\newtheorem{algo}{Algorithm}
\newtheorem{lemma}{Lemma}
\newtheorem{proposition}{Proposition}
\Crefname{tabular}{Tab.}{Tabs.}
\Crefname{figure}{Fig.}{Fig.}
\Crefname{algo}{Alg.}{Alg.}
\Crefname{theorem}{Thm.}{Thm.}
\def\Tr{\mathrm{Tr}}
\def\RR{{\mathbb R}}
\DeclareMathAlphabet{\mathpzc}{OT1}{pzc}{m}{it}
\def\cC{{\mathcal C}}
\def\cE{{\mathcal E}}
\def\cF{{\mathcal F}}
\def\cL{{\mathcal L}}
\def\cN{{\mathcal N}}
\def\cO{{\mathcal O}}
\def\bG{{\mathbf G}}
\def\bI{{\mathbf I}}
\def\bM{{\mathbf M}}
\def\bN{{\mathbf N}}
\def\bv{{\mathbf v}}
\def\bX{{\mathbf X}}
\def\bY{{\mathbf Y}}
\def\bgamma{\boldsymbol{\gamma}}
\def\bGamma{\boldsymbol{\Gamma}}
\def\bSigma{\boldsymbol{\Sigma}}
\def\b1{\boldsymbol{1}}
\def\b{{\underline{b}}}
\newcommand\diag[1]{\mathrm{Diag}\left[#1\right]}
\def\se{{\sigma_e^2}}
\def\BibTeX{{\rm B\kern-.05em{\sc i\kern-.025em b}\kern-.08em
    T\kern-.1667em\lower.7ex\hbox{E}\kern-.125emX}}
\begin{document}

\title{Revisiting CHAMPAGNE: Sparse Bayesian Learning as Reweighted Sparse Coding\\
    \thanks{This work was supported by the French National Agency for Research through the BMWs project (ANR-20-CE45-0018)}
}

\author{\IEEEauthorblockN{Dylan Sechet, Matthieu Kowalski}
    \IEEEauthorblockA{\textit{Laboratoire Interdisciplinaire des Sciences du Numériques} \\
        \textit{Inria, Université Paris-Saclay, CNRS}\\
        Gif-sur-Yvette, France \\
        \{dylan.sechet,matthieu.kowalski\}@inria.fr}
    \and
    \IEEEauthorblockN{Samy Mokhtari, Bruno Torrésani}
    \IEEEauthorblockA{\textit{Aix Marseille Univ, CNRS, I2M}\\
        \textit{Marseille, France} \\
        \{samy.mokhtari, bruno.torresani\}@univ-amu.fr}
}

\maketitle

\begin{abstract}
    Objectif: This paper revisits the CHAMPAGNE algorithm within the Sparse Bayesian Learning (SBL) framework and establishes its connection to reweighted sparse coding. We demonstrate that the SBL objective can be reformulated as a reweighted $\ell_{21}$-minimization problem, providing a more straightforward interpretation of the sparsity mechanism and enabling the design of an efficient iterative algorithm. Additionally, we analyze the behavior of this reformulation in the low signal-to-noise ratio (SNR) regime, showing that it simplifies to a weighted $\ell_{21}$-regularized least squares problem. Numerical experiments validate the proposed approach, highlighting its improved computational efficiency and ability to produce exact sparse solutions, particularly in simulated MEG source localization tasks.
\end{abstract}

\begin{IEEEkeywords}
    Inverse problem, Sparse Bayesian Learning, reweighted (group)-lasso
\end{IEEEkeywords}

\section{Introduction}
Inverse problems focus on reconstructing unknown parameters or sources from incomplete and noisy observed data. In the context of magnetoencephalography (MEG), this involves estimating cortical activity from sensors located at several points around the subject's head~\cite{baillet2017magnetoencephalography}. Such an inverse problem is highly under-determined, where the number of potential sources exceeds the available measurements. Addressing this under-determination requires introducing some regularization techniques or prior information.

Among these strategies, the minimum norm estimate (MNE)~\cite{hamalainen1994interpreting} is a classical approach that favors smooth solutions by minimizing the $\ell_2$ norm of the sources. However, MNE fails to capture the localized nature of brain activity. Sparsity-promoting methods have been developed to address this limitation, leveraging the assumption that brain activity is spatially sparse.

In particular, lasso-type methods~\cite{tibshirani1996regression,chen1994basis}, which include $\ell_1$-regularized least squares and extensions such as the group lasso~\cite{yuan2006model}, enforce sparsity directly on the solution, and have been widely adopted for MEG source localization~\cite{ou2009distributed,gramfort2012mixed}. They use convex penalties, and their optimization problems can be efficiently solved with iterative algorithms, such as the (Fast) Iterative Shrinkage-Thresholding Algorithm ((F)ISTA)~\cite{beck2009fast, tseng2010approximation}. However, lasso-based methods rely heavily on manually tuning regularization parameters, which can be challenging in practice.

In contrast, Bayesian frameworks such as Sparse Bayesian Learning (SBL)~\cite{wipf2004sparse} require minimal tuning, as the only parameter to adjust is the noise variance, which can also be estimated within the framework. SBL introduces a hierarchical Bayesian model where source variances act as hyperparameters driving sparsity, and solutions are estimated iteratively. Among SBL algorithms, \textit{CHAMPAGNE}~\cite{owen2008estimating,hashemi2021unification} has been particularly successful, offering robust solutions by jointly estimating the sources and their variances. However, while CHAMPAGNE induces sparsity in source variances, the corresponding source estimates remain only approximately sparse, with small but not exactly zero components. This introduces the need for manual thresholding.

\paragraph{Contributions and outline}
In this work, we revisit the CHAMPAGNE algorithm within the SBL framework by reformulating it as a reweighted sparse
coding problem. \cref{sec:SBL} introduces the general SBL model and reviews the CHAMPAGNE algorithm.
\cref{sec:main_results} demonstrates that the SBL objective can be expressed as a reweighted sparse coding problem,
which leads to a more efficient algorithm. We also provide a convergence analysis. Furthermore, we examine the behavior
of this reformulation in the low signal-to-noise ratio (SNR) regime. Finally, our numerical experiments validate our
approach, showing improved computational efficiency and achieving exact sparsity in the solutions.

\paragraph{Notations}
Vectors are in bold lowercase (e.g., $\bv$), and matrices in bold uppercase (e.g., $\bM$).
The element at the $i$-th row and $j$-th column of a matrix $\bM$ is denoted by $M[i,j]$,
while $\bM[i,:]$ represents the $i$-th row, and $\bM[:,j]$ denotes the $j$-th column.
Similarly, the $i$-th coordinate of a vector $\bv$ is denoted by $v[i]$ or $v_i$

\section{SBL and CHAMPAGNE algorithm}\label{sec:SBL}
This section presents the general SBL model. For the sake of simplicity, we assume that the noise variance is known. We then focus on the CHAMPAGNE algorithm, which adopts a majorize-minimize strategy relying on a convex upper bound of the SBL objective.

\subsection{SBL model}
In the context of MEG, the objective is to estimate the cortical activity $\bX$ from the sensor measurements $\bY$. This relationship is modeled as follows:
\begin{equation}
    \bY = \bG \bX + \bN,
\end{equation}
where $\bY \in \RR^{M \times T}$ represents the observed sensor data, $\bG \in \RR^{M \times N}$ is the leadfield matrix describing the mapping from cortical sources to sensors, $\bX \in \RR^{N \times T}$ denotes the unknown source activities, and $\bN \in \RR^{M \times T}$ is additive white (in space and time) Gaussian noise with $N[n,t]  \sim \cN(0, \se), \forall n, t.$.

To promote sparsity, SBL models the source activity $\bX$ as a zero-mean multivariate Gaussian with a diagonal covariance matrix $\bGamma$:
\begin{align}
    \bX[:,t] & \sim \cN(\boldsymbol{0}, \bGamma), \quad \bGamma = \diag{\bgamma}, \quad \forall n, \gamma_n \geq 0,
\end{align}
where the hyperparameters $\gamma_n$ encode the variance of each source, indirectly controlling its sparsity. While various priors can be placed on $\gamma_n$, we focus on an exponential prior:
\begin{align}
    \gamma_n \sim \cE(\rho),
\end{align}
a particular case of the classical Gamma prior often used for scale parameters.

To estimate $\bgamma$, SBL proposes to maximize the marginal likelihood of the data (type-II likelihood), which leads to minimizing the following cost function:
\begin{align}
    \cL^{II}(\bgamma) & = -\log p(\bY, \bgamma)  = -\log p(\bY | \bgamma) - \log p(\bgamma)                                                             \\
                      & = \frac{1}{2} \bY^T \bSigma_y(\bgamma)^{-1} \bY + \frac{1}{2} \log |\bSigma_y(\bgamma)| + \rho \sum_{n=1}^N \gamma_n, \nonumber
\end{align}
where:
\begin{equation}
    \label{eq:gamma_y}
    \bSigma_y(\bgamma) = \se \bI_M + \bG \bGamma \bG^T.
\end{equation}

To optimize $\cL^{II}(\bgamma)$, the Expectation-Maximization (EM) algorithm~\cite{wipf2004sparse} alternates updates of the sources $\bX$ and the hyperparameters $\bgamma$. However, it tends to converge slowly. In contrast, the CHAMPAGNE algorithm~\cite{owen2008estimating} utilizes a convex bounding strategy, resulting in significantly faster computations. Hashemi et al.~\cite{hashemi2021unification} provides a unified perspective on these and related SBL algorithms, highlighting CHAMPAGNE as an effective and scalable solution, which is further elaborated in the next section.

\subsection{CHAMPAGNE algorithm}
The CHAMPAGNE algorithm comes from reformulating the original marginal likelihood objective as a minimization problem involving the sources $\bX$, the hyperparameters $\bgamma$, and an auxiliary dual variable $\bv$. The resulting cost function is expressed as:
\begin{align}
    \cC(\bX,\bgamma,\bv) = & \frac{1}{T\se} \|\bY - \bG\bX\|^2 +  \frac{1}{T}
    \sum_{n=1}^N\sum_{t=1}^{T}  \frac{X[n,t]^2}{\gamma_n} \nonumber                  \\
                           & + \bv^T\bgamma - w^*(\bv) + \rho \sum_{n=1}^N \gamma_n,
    \label{eq:crit_champ}
\end{align}
where $w^*(\bv)$ is the concave conjugate of $\log |\bSigma_y(\bgamma)|$:
\begin{equation}\label{eq:wstar}
    w^*(\bv) = \max_{\bgamma}  \bv^T\bgamma -  \log |\bSigma_y(\bgamma)|,
\end{equation}
% and its dual property ensures the equivalence:
% \begin{equation}
%     \frac{1}{2} \log |\bSigma_y(\bgamma)|  = \max_{\bv} \bv^T\bgamma - w^*(\bv).
% \end{equation}
The CHAMPAGNE algorithm alternates between updates of $\bv$, $\bX$, and $\bgamma$, minimizing~\cref{eq:crit_champ} with respect to each variable while keeping the others fixed. The updates are computed iteratively as follows:
\begin{align}
    \bv^{(i)}      & = \diag{\bG^T\bSigma_y^{-1}(\bgamma^{(i-1)})\bG}, \label{eq:v_update}          \\
    \bX^{(i)}      & = \bGamma^{(i-1)}\bG^T\bSigma_y^{-1}(\bgamma^{(i-1)}) \bY, \label{eq:x_update} \\
    \gamma_n^{(i)} & = \frac{\|\bX^{(i)}[n,:]\|}{\sqrt{T (v_n + \rho)}}. \label{eq:gamma_update}
\end{align}
The iterative updates have been shown to converge to a local minimum~\cite{hashemi2021unification}.

\section{SBL by (re)weighted sparse coding}\label{sec:main_results}
This section shows that the original SBL optimization problem can be reformulated as a reweighted sparse coding problem, where sparsity is enforced iteratively through adaptive weights. Then, we analyze the behavior of this reformulation in the low SNR regime, where the log-determinant term simplifies, leading to a weighted $\ell_{21}$-regularized least squares problem.

\subsection{Reformulation and Algorithm}
The following proposition shows that for a fixed $\bv$, the minimization of~\cref{eq:crit_champ} reduces to a weighted (group)-lasso sparse coding problem.

\begin{proposition}
    \label{prop:sparse_coding}
    Minimizing the joint cost function concerning both $\bX$ and $\bgamma$ is equivalent to solving a sparse coding problem for $\bX$. Specifically, we have
    \begin{align*}
        \min_{\bgamma \geq 0, \bX }                                                                               & \frac{ \|\bY - \bG\bX\|^2 }{T\se}
        + \sum_{n=1}^N\sum_{t=1}^T\frac{X[n,t]^2}{T\gamma_n}  + \bv^T\bgamma  + \rho \sum_{n=1}^N \gamma_n        &                                   & \\
                                                                                                                  & =
        \min_{\bX } \frac{1}{T\se} \|\bY - \bG\bX\|^2 +    2\sum_{n=1}^N \sqrt{\frac{\rho + v_n}{T}} \|\bX[n,:]\| &                                   &
    \end{align*}
\end{proposition}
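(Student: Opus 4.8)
The plan is to minimize the left-hand objective over $\bgamma$ first, for fixed $\bX$, and to observe that this inner minimization decouples across the source index $n$. Collecting the $\gamma_n$-dependent terms and using $\sum_{t=1}^T X[n,t]^2 = \|\bX[n,:]\|^2$, the minimization over $\bgamma \geq 0$ separates into $N$ scalar subproblems of the form
\[
    \min_{\gamma_n \geq 0} \; \frac{\|\bX[n,:]\|^2}{T\gamma_n} + (v_n + \rho)\,\gamma_n,
\]
so it suffices to solve each one in closed form and substitute the optimal value back into the objective.

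For each $n$ I would treat the scalar subproblem via the arithmetic--geometric mean inequality: for $a \geq 0$, $b > 0$ and $\gamma > 0$ one has $a/\gamma + b\gamma \geq 2\sqrt{ab}$, with equality at $\gamma = \sqrt{a/b}$. Taking $a = \|\bX[n,:]\|^2/T$ and $b = v_n + \rho$ (which is strictly positive since $\rho > 0$ and $v_n \geq 0$, the latter because $v_n$ is a diagonal entry of $\bG^T\bSigma_y^{-1}\bG$ with $\bSigma_y \succ 0$), the minimum value equals $2\sqrt{(v_n+\rho)/T}\,\|\bX[n,:]\|$, attained at $\gamma_n = \|\bX[n,:]\|/\sqrt{T(v_n+\rho)}$. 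This recovers precisely the update in \cref{eq:gamma_update}, which is a useful consistency check.

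The one point requiring care is the boundary $\gamma_n = 0$, which the AM--GM argument formally excludes. When $\|\bX[n,:]\| > 0$ the scalar objective diverges to $+\infty$ as $\gamma_n \to 0^+$, so the minimizer is the interior point found above and the bound is tight. When $\|\bX[n,:]\| = 0$ the objective reduces to $(v_n+\rho)\,\gamma_n$, whose minimum over $\gamma_n \geq 0$ is $0$, attained at $\gamma_n = 0$; since $2\sqrt{(v_n+\rho)/T}\,\|\bX[n,:]\| = 0$ in this case as well, the closed-form minimum value is valid for both regimes. Summing the per-$n$ minima and restoring the data-fidelity term (which is independent of $\bgamma$) produces exactly the right-hand objective, and minimizing the resulting expression over $\bX$ establishes the claimed equivalence. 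I expect the degenerate case $\|\bX[n,:]\| = 0$ to be the only genuine obstacle, as the remainder is a routine separable minimization.
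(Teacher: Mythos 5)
Your proof is correct and takes essentially the same route as the paper's: the paper's one-line argument substitutes the optimal $\gamma_n = \|\bX[n,:]\|/\sqrt{T(v_n+\rho)}$ from \cref{eq:gamma_update} back into \cref{eq:crit_champ}, which is exactly the separable scalar minimization you solve via AM--GM. Your treatment of the boundary case $\|\bX[n,:]\| = 0$ and the positivity of $v_n + \rho$ simply makes explicit the details the paper leaves implicit.
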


\begin{proof}
    The equivalence is obtained by substituting the optimal solution for $\bgamma$ derived in~\cref{eq:gamma_update} into the original cost function~\cref{eq:crit_champ}.
\end{proof}

Based on this reformulation, the iterative algorithm proceeds as follows
\begin{algo}\label{algo}
    Initialization: $\bgamma^{(0)}\in\RR_+^N$.
    \begin{itemize}
        \item Update $\bv$:
              \begin{equation}
                  \bv^{(i)} = \diag{\bG^T \bSigma_y^{-1}(\bgamma^{(i)}) \bG}.
              \end{equation}
        \item Minimize w.r.t $\bX$
              \begin{equation}
                  \frac{1}{2\se} \|\bY - \bG\bX\|^2 +  \sqrt{T}\sum_{n=1}^N \sqrt{\rho + v_n}\|\bX[n,:]\|\ ,
              \end{equation}
              using $K$ iterations of ISTA. Denotes the output by $\bX^{(i)}$.
        \item Update $\bgamma$:
              \begin{equation}
                  \gamma_n^{(i+1)} = \frac{\|\bX^{(i)}[n,:]\|}{\sqrt{ T(v_n^{(i)} + \rho)}}.
              \end{equation}
    \end{itemize}
\end{algo}

To analyze the convergence properties of~\cref{algo}, we introduce the following auxiliary objective function $\cF$:
\begin{equation}
    \cF(\bX, \bv) = \frac{\|\bY - \bG\bX\|^2 }{T\se} +  2\sum_{n=1}^N \sqrt{\frac{\rho + v_n}{T}} \|\bX[n,:]\| - w^*(\bv).
\end{equation}
with $w^*(\bv)$ given by~\cref{eq:wstar}.
\(\cF\) is directly linked to the original SBL cost function \(\cL^{II}\), as stated in the following lemma.

\begin{lemma}
    \label{lem:stationary}
    If $(\bX, \bv)$ is a stationary point of $\cF$, then $\bgamma$ defined as $\gamma_n = \frac{\|\bX[n,:]\|}{\sqrt{ (\rho + v_n)}}$ is a stationary point of $\cL^{II}$.
\end{lemma}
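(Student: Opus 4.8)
The plan is to read both $\cL^{II}$ and $\cF$ as partial minimizations of the \emph{same} joint cost $\cC(\bX,\bgamma,\bv)$ of \cref{eq:crit_champ}, over complementary blocks of variables, and then to transport stationarity from one to the other by an envelope argument. Indeed, minimizing $\cC$ over $\bgamma\geq0$ reproduces $\cF$ by the very computation behind \cref{prop:sparse_coding}, so that $\cF(\bX,\bv)=\min_{\bgamma\geq0}\cC(\bX,\bgamma,\bv)$, with the minimizing $\gamma_n$ equal to the ratio stated in the lemma. Conversely, minimizing $\cC$ over $\bX$ (the quadratic-plus-weighted-$\ell_2$ part, whose minimizer is the posterior mean $\bGamma\bG^T\bSigma_y^{-1}\bY$) and over $\bv$ (where $\min_\bv[\bv^T\bgamma-w^*(\bv)]=\log|\bSigma_y(\bgamma)|$ by conjugate duality of the log-determinant) recovers $\cL^{II}$ up to a fixed positive constant. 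The goal is thus to show that a stationary point $(\bX,\bv)$ of $\cF$, completed by the associated minimizer $\bgamma$, is a joint critical point of $\cC$, which forces $\nabla_\bgamma\cL^{II}(\bgamma)=0$.

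I would first record the two stationarity conditions of $\cF$. Differentiating in $\bX$ and taking a subgradient of the $\ell_{21}$ penalty on each active row gives $\tfrac1{\se}\bG[:,n]^T(\bY-\bG\bX)=\bX[n,:]/\gamma_n$ once the definition of $\gamma_n$ is inserted; stacking the rows and invoking the push-through identity $(\se\bI+\bGamma\bG^T\bG)^{-1}\bGamma\bG^T=\bGamma\bG^T\bSigma_y^{-1}$ yields $\bX=\bGamma\bG^T\bSigma_y^{-1}(\bgamma)\bY$, i.e. $\bX$ is exactly the minimizer of $\cC(\cdot,\bgamma,\bv)$. Differentiating in $\bv$ and using the envelope identity $\nabla_\bv w^*(\bv)=\bgamma(\bv)$ for the conjugate shows that the stationary $\bv$ is the one attaining the duality, i.e. $v_n=[\bG^T\bSigma_y^{-1}(\bgamma)\bG]_{n,n}$, and that the minimizing $\bgamma(\bv)$ is precisely the lemma's $\bgamma$. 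Together these two computations certify that $\nabla_\bX\cC=0$ and $\nabla_\bv\cC=0$, while $\nabla_\bgamma\cC=0$ holds by construction of $\bgamma$ as the inner minimizer; since $\cC$ is convex in $\bX$ and in $\bv$ separately, these vanishing gradients identify $(\bX,\bv)$ with the minimizers defining $\cL^{II}$.

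The conclusion then follows from the envelope theorem for the partial minimization $\cL^{II}(\bgamma)=\min_{\bX,\bv}\cC(\bX,\bgamma,\bv)$: at the minimizing pair we have $\nabla_\bgamma\cL^{II}(\bgamma)=\nabla_\bgamma\cC(\bX,\bgamma,\bv)$, and the right-hand side is zero because $\bgamma$ is the interior minimizer of $\cC$ in $\bgamma$. Equivalently, one may verify the cancellation by hand: $\partial_{\gamma_n}\cL^{II}$ splits into the data term $-\|\bG[:,n]^T\bSigma_y^{-1}\bY\|^2$, the log-determinant term $v_n$, and the prior term $\rho$; the posterior-mean identity gives $\|\bG[:,n]^T\bSigma_y^{-1}\bY\|=\|\bX[n,:]\|/\gamma_n$, so the definition $\gamma_n=\|\bX[n,:]\|/\sqrt{\rho+v_n}$ forces $\|\bG[:,n]^T\bSigma_y^{-1}\bY\|^2=\rho+v_n$ and the three contributions cancel on every active coordinate. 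Inactive coordinates, where $\|\bX[n,:]\|=0$ and hence $\gamma_n=0$, are handled by the matching one-sided (Karush--Kuhn--Tucker) condition on the boundary $\gamma_n\geq0$.

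I expect the main obstacle to be the conjugate-duality bookkeeping around $w^*$: one must justify that the log-determinant conjugate is differentiable with $\nabla_\bv w^*(\bv)=\bgamma(\bv)$ and that the associated contact point satisfies $v_n=[\bG^T\bSigma_y^{-1}(\bgamma)\bG]_{n,n}$, so that the single vector $\bgamma$ simultaneously minimizes $\cC$ in $\bgamma$ and indexes the evaluation of $\cL^{II}$. Secondary care is needed to track the factor-of-two and $T$ normalizations that differ between the statement of $\cL^{II}$, the objective $\cF$, and the $\bgamma$-update, so that $\cL^{II}$ and $\min_{\bX,\bv}\cC$ agree up to a positive constant (which is all that stationarity requires), and to treat the non-smoothness of the $\ell_{21}$ term at vanishing source rows; both are routine once the contact relation and the posterior-mean identity are in place.
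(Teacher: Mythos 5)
Your proposal is correct, and its skeleton matches the paper's proof at the pivot point: both arguments lift a stationary point of $\cF$ to a stationary point of the joint cost $\cC$ via the identity $\cF(\bX,\bv)=\min_{\bgamma\geq 0}\cC(\bX,\bgamma,\bv)=\cC(\bX,\bgamma,\bv)$, with $\bgamma$ the closed-form inner minimizer from \cref{prop:sparse_coding}. Where you genuinely diverge is the final step. The paper disposes of the implication ``stationary point of $\cC$ $\Rightarrow$ stationary point of $\cL^{II}$'' by citing~\cite{hashemi2021unification}; you instead prove it, by reading $\cL^{II}$ as the complementary partial minimization $\cL^{II}(\bgamma)=\min_{\bX,\bv}\cC(\bX,\bgamma,\bv)$ (push-through/posterior-mean identity for the $\bX$-block, concave conjugacy of the log-determinant for the $\bv$-block) and applying the envelope theorem, backed up by the explicit cancellation $\|\bG[:,n]^T\bSigma_y^{-1}\bY\|^2=\rho+v_n$ on active rows and one-sided KKT conditions on inactive ones. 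Your version is therefore self-contained where the paper's is not, at the cost of the conjugacy and differentiability bookkeeping you correctly identify as the delicate part; the paper's version is three lines but opaque, since the reader must reconstruct exactly the computation you carry out from the cited reference.

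Two caveats, both inherited from the paper rather than introduced by you. First, \cref{eq:wstar} defines $w^*$ with a $\max$, but for the concave-conjugate identities your proof relies on (in particular $\min_\bv[\bv^T\bgamma-w^*(\bv)]=\log|\bSigma_y(\bgamma)|$ and $\nabla_\bv w^*(\bv)=\bgamma(\bv)$) it must be a $\min$; your argument implicitly uses the correct convention. Second, your claim that $\min_{\bX,\bv}\cC$ recovers $\cL^{II}$ ``up to a fixed positive constant'' is not literally true under the paper's stated normalizations: the data-fit and log-determinant terms are rescaled by $2/T$ relative to $\cL^{II}$ while the prior term $\rho\sum_n\gamma_n$ is not, so the two objectives share stationary points only once $\rho$ is rescaled accordingly (or when $\rho=0$). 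The same inconsistency appears in the paper itself --- compare the lemma's $\gamma_n=\|\bX[n,:]\|/\sqrt{\rho+v_n}$ with the $\sqrt{T(v_n+\rho)}$ in \cref{eq:gamma_update} --- and you flag the normalization tracking explicitly, so this does not break your argument; but the phrase ``up to a positive constant'' should be replaced by the precise statement that all terms, including the prior, must carry the same scaling.
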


\begin{proof}
    The equivalence follows directly from the relation:
    \[
        \cF(\bX^{(i)}, \bv^{(i)}) = \cC(\bX^{(i)}, \bgamma^{(i+1)}, \bv^{(i)})\ \forall i\geq 0 .
    \]
    Which is also true for a stationary point $(\bX, \bv)$ of $\cF$ with $\gamma_n = \frac{\|\bX[n,:]\|}{\sqrt{T (\rho + v_n)}}$ for all $n$. Hence, $(\bX, \bgamma, \bv)$ is a stationary point of $\cC$, and then $\bgamma$ satisfies the stationarity conditions of $\cL^{II}$ as shown in~\cite{hashemi2021unification}.
\end{proof}

Next, we show that \(\cF\) decreases monotonically during the iterations.

\begin{lemma}
    \label{lem:F_decrease}
    For all $i\geq 0$, we have
    \[
        \cF(\bX^{(i+1)}, \bv^{(i+1)}) \leq \cF(\bX^{(i)}, \bv^{(i)}) - \frac{L}{2} \|\bX^{(i+1)} - \bX^{(i)}\|^2.
    \]
\end{lemma}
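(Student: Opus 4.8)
The plan is to read one pass of \cref{algo} (from iterate $i$ to iterate $i+1$) as two successive block updates and to control each separately, exploiting the envelope identity $\cF(\bX,\bv) = \min_{\bgamma\ge 0}\cC(\bX,\bgamma,\bv)$ that underlies \cref{prop:sparse_coding} and the proof of \cref{lem:stationary}. Concretely, I would insert the intermediate point $(\bX^{(i)},\bv^{(i+1)})$ and split the increment as
\[
    \cF(\bX^{(i+1)},\bv^{(i+1)}) - \cF(\bX^{(i)},\bv^{(i)})
    = \underbrace{\big[\cF(\bX^{(i)},\bv^{(i+1)}) - \cF(\bX^{(i)},\bv^{(i)})\big]}_{\bv\text{-step}}
    + \underbrace{\big[\cF(\bX^{(i+1)},\bv^{(i+1)}) - \cF(\bX^{(i)},\bv^{(i+1)})\big]}_{\bX\text{-step}} ,
\]
and then show the first bracket is $\le 0$ while the second is $\le -\tfrac{L}{2}\|\bX^{(i+1)}-\bX^{(i)}\|^2$.

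For the $\bv$-step I would route the argument through $\cC$. By the AM--GM computation behind \cref{prop:sparse_coding}, the update $\bgamma^{(i+1)}$ is exactly $\argmin_{\bgamma\ge 0}\cC(\bX^{(i)},\bgamma,\bv^{(i)})$, so that $\cF(\bX^{(i)},\bv^{(i)}) = \cC(\bX^{(i)},\bgamma^{(i+1)},\bv^{(i)})$, as already used in \cref{lem:stationary}. The only $\bv$-dependent part of $\cC$ is $\bv^T\bgamma - w^*(\bv)$, and the dual (biconjugate) form of \cref{eq:wstar} gives $\log|\bSigma_y(\bgamma)| = \min_{\bv}\big(\bv^T\bgamma - w^*(\bv)\big)$, with the minimizer attained at the gradient $\bv = \diag{\bG^T\bSigma_y^{-1}(\bgamma)\bG}$. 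Since $\bv^{(i+1)}$ is precisely this gradient evaluated at $\bgamma^{(i+1)}$, evaluating the minimality inequality at $\bv=\bv^{(i)}$ yields $\cC(\bX^{(i)},\bgamma^{(i+1)},\bv^{(i+1)}) \le \cC(\bX^{(i)},\bgamma^{(i+1)},\bv^{(i)})$. Combining this with $\cF(\bX^{(i)},\bv^{(i+1)}) = \min_{\bgamma}\cC(\bX^{(i)},\bgamma,\bv^{(i+1)}) \le \cC(\bX^{(i)},\bgamma^{(i+1)},\bv^{(i+1)})$ closes the $\bv$-step: $\cF(\bX^{(i)},\bv^{(i+1)}) \le \cF(\bX^{(i)},\bv^{(i)})$.

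For the $\bX$-step I would use that, at the frozen weights $\bv^{(i+1)}$, the map $\bX\mapsto\cF(\bX,\bv^{(i+1)})$ coincides, up to the $\bX$-independent constant $-w^*(\bv^{(i+1)})$, with the group-lasso objective minimized in \cref{algo} (the step-2 objective there is just a positive multiple of it). Its smooth part $\tfrac{1}{T\se}\|\bY-\bG\bX\|^2$ has an $L$-Lipschitz gradient with $L=\tfrac{2}{T\se}\|\bG\|^2$, while the group-norm penalty is convex, so the standard ISTA sufficient-decrease lemma applies to this block. Warm-starting ISTA at $\bX^{(i)}$ and using step size $1/L$, a proximal-gradient step produces the sufficient decrease $\cF(\bX^{(i+1)},\bv^{(i+1)}) \le \cF(\bX^{(i)},\bv^{(i+1)}) - \tfrac{L}{2}\|\bX^{(i+1)}-\bX^{(i)}\|^2$. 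Adding the $\bv$-step and the $\bX$-step then gives the stated inequality.

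The main obstacle is not a single hard estimate but the bookkeeping at the junction of the two blocks. First, the displacement in the bound must be measured between the genuine iterates $\bX^{(i)}$ and $\bX^{(i+1)}$, which forces the warm-start convention (ISTA initialized at $\bX^{(i)}$). Second, one must reconcile the per-step ISTA decrease with a bound written in terms of the \emph{total} displacement $\|\bX^{(i+1)}-\bX^{(i)}\|$: this is immediate for $K=1$, whereas for $K>1$ one either keeps only the first-step decrease and discards the rest by monotonicity of ISTA, or aggregates the inner steps (replacing $L$ by $L/K$ via Cauchy--Schwarz over the intermediate iterates). The conceptual crux remains the $\bv$-step, where the correct orientation of the conjugate inequality from \cref{eq:wstar} — namely that the gradient update $\bv^{(i+1)}$ is the minimizer of $\bv^T\bgamma^{(i+1)} - w^*(\bv)$ — is what guarantees the non-increase.
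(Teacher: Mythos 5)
Your proof is correct and follows the same two-step skeleton as the paper's: an ISTA sufficient-decrease inequality at the frozen weights $\bv^{(i+1)}$, combined with the non-increase $\cF(\bX^{(i)},\bv^{(i+1)}) \le \cF(\bX^{(i)},\bv^{(i)})$ of the weight update. Where you genuinely add value is in the $\bv$-step: the paper disposes of it in one line by asserting that $\bv^{(i+1)}$ minimizes $\cF(\bX^{(i+1)},\bv)$, which as written is not the statement actually needed (and is not literally true --- $\bv^{(i+1)}$ is the minimizer of $\bv\mapsto \bv^T\bgamma^{(i+1)}-w^*(\bv)$, not a stationary point of $\bv\mapsto\cF(\bX^{(i)},\bv)$, since $\bgamma^{(i+1)}$ was built from $\bv^{(i)}$ rather than $\bv^{(i+1)}$). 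Your chain through $\cC$, namely $\cF(\bX^{(i)},\bv^{(i+1)}) \le \cC(\bX^{(i)},\bgamma^{(i+1)},\bv^{(i+1)}) \le \cC(\bX^{(i)},\bgamma^{(i+1)},\bv^{(i)}) = \cF(\bX^{(i)},\bv^{(i)})$, is the rigorous repair of that step, using only the envelope identity $\cF(\bX,\bv)=\min_{\bgamma\ge 0}\cC(\bX,\bgamma,\bv)$, the optimality of $\bgamma^{(i+1)}$ for $\cC(\bX^{(i)},\cdot,\bv^{(i)})$, and the conjugate-duality characterization of the $\bv$-update. Your caveat about $K>1$ inner ISTA iterations (the constant degrading from $L$ to $L/K$ when the bound is expressed in the total displacement $\|\bX^{(i+1)}-\bX^{(i)}\|$) is also a real subtlety the paper glosses over by citing Tseng; the weakened constant is harmless for the convergence theorem, which only needs monotone decrease and vanishing displacement.
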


\begin{proof}
    Using the properties of ISTA, we have (see for example~\cite{tseng2010approximation}):
    \begin{equation}
        \cF(\bX^{(i+1)}, \bv^{(i+1)}) \leq \cF(\bX^{(i)}, \bv^{(i+1)}) - \frac{L}{2} \|\bX^{(i+1)} - \bX^{(i)}\|^2.
    \end{equation}
    Furthermore, since $\bv^{(i+1)}$ minimizes $\cF(\bX^{(i+1)}, \bv)$, the result follows.
\end{proof}

By combining the results above, we can state the following theorem on the convergence of the proposed algorithm.

\begin{theorem}
    Any accumulation point of the sequence $\{\bgamma^{(i)}\}$ is a stationary point of $\cL^{II}$.
\end{theorem}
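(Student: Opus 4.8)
The plan is to run a monotone-descent-plus-compactness argument on the surrogate $\cF$ and then transport the conclusion back to $\cL^{II}$ through \cref{lem:stationary}. The first step is to show that $\{\cF(\bX^{(i)},\bv^{(i)})\}$ converges. By \cref{lem:F_decrease} the sequence is nonincreasing, so it suffices to bound it from below. Using the identity $\cF(\bX^{(i)},\bv^{(i)})=\cC(\bX^{(i)},\bgamma^{(i+1)},\bv^{(i)})$ from the proof of \cref{lem:stationary}, together with the fact that minimizing $\cC$ reproduces $\cL^{II}$, the lower bound reduces to showing that $\cL^{II}$ is bounded below; this holds because $\bSigma_y(\bgamma)\succeq\se\bI_M$ forces $\log|\bSigma_y(\bgamma)|\ge M\log\se$, while the data-fidelity, the $\ell_{21}$ term and $\rho\sum_n\gamma_n$ are all nonnegative. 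Hence $\cF(\bX^{(i)},\bv^{(i)})\downarrow\cF^\star>-\infty$.

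Next I would telescope \cref{lem:F_decrease}: summing $\tfrac{L}{2}\|\bX^{(i+1)}-\bX^{(i)}\|^2\le \cF(\bX^{(i)},\bv^{(i)})-\cF(\bX^{(i+1)},\bv^{(i+1)})$ over $i$ gives $\tfrac{L}{2}\sum_{i\ge0}\|\bX^{(i+1)}-\bX^{(i)}\|^2\le \cF(\bX^{(0)},\bv^{(0)})-\cF^\star<\infty$, so that $\|\bX^{(i+1)}-\bX^{(i)}\|\to0$. This is the crucial ingredient: it expresses that the residual of the inner proximal-gradient (ISTA) step vanishes asymptotically, so that in the limit the first-order optimality condition of the weighted group-lasso subproblem holds exactly.

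Then I would pass to an accumulation point $\bgamma^{(i_k)}\to\bgamma^\star$. Since $\bgamma\mapsto\diag{\bG^T\bSigma_y^{-1}(\bgamma)\bG}$ is continuous ($\bSigma_y$ is uniformly invertible), $\bv^{(i_k)}\to\bv^\star=\diag{\bG^T\bSigma_y^{-1}(\bgamma^\star)\bG}$; because $\rho>0$ keeps the weights $\sqrt{(\rho+v_n)/T}$ bounded away from $0$, the $\bX$-subproblem has bounded sublevel sets and $\{\bX^{(i_k)}\}$ admits a further convergent subsequence $\bX^{(i_k)}\to\bX^\star$. Passing to the limit in the ISTA fixed-point equation, using $\|\bX^{(i+1)}-\bX^{(i)}\|\to0$ and joint continuity of the prox-gradient map in $(\bX,\bv)$, shows that $(\bX^\star,\bv^\star)$ is a stationary point of $\cF$. \cref{lem:stationary} then identifies the associated $\gamma_n=\|\bX^\star[n,:]\|/\sqrt{T(\rho+v_n^\star)}$ as a stationary point of $\cL^{II}$, and by the $\bgamma$-update together with continuity this limit coincides with $\bgamma^\star$, which is therefore stationary.

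I expect the main obstacle to be the last two links. First, because the inner loop runs only $K$ ISTA iterations rather than to convergence, $\bX^{(i)}$ is not an exact subproblem minimizer, so one must argue carefully that the vanishing increment together with continuity forces the limiting optimality conditions to hold. Second, one must verify that the accumulation point $\bgamma^\star$ of $\{\bgamma^{(i)}\}$ is exactly the $\bgamma$ associated with $(\bX^\star,\bv^\star)$ — equivalently that $\|\bgamma^{(i+1)}-\bgamma^{(i)}\|\to0$ — which requires controlling the coupling between the $\bX$-, $\bv$- and $\bgamma$-updates on the relevant bounded set.
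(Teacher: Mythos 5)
Your proposal is correct in structure and follows the same skeleton as the paper's proof---monotone descent of $\cF$ via \cref{lem:F_decrease}, lower-boundedness of the value sequence, and transport of stationarity to $\cL^{II}$ via \cref{lem:stationary}---but it is considerably more complete than what the paper actually writes, and the differences are instructive. Your lower bound goes through the chain $\cF(\bX^{(i)},\bv^{(i)})=\cC(\bX^{(i)},\bgamma^{(i+1)},\bv^{(i)})\ge\min\cC$, which equals $\cL^{II}$ up to scaling and is bounded below since $\bSigma_y(\bgamma)\succeq\se\bI_M$ gives $\log|\bSigma_y(\bgamma)|\ge M\log\se$ while the remaining terms are nonnegative; the paper instead argues that $\bSigma_y\succeq\se\bI_M$ keeps $\{\bv^{(i)}\}$ bounded and hence $w^*(\bv^{(i)})$ bounded above. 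Both routes work. The more substantive difference is that the paper jumps directly from ``the values $\cF(\bX^{(i)},\bv^{(i)})$ converge'' plus ``$\cF$ is continuous'' to ``accumulation points are stationary,'' which is not a valid implication on its own; your telescoping of the sufficient-decrease inequality to obtain $\|\bX^{(i+1)}-\bX^{(i)}\|\to0$, followed by passing to the limit in the prox-gradient fixed-point equation along a convergent subsequence, is precisely the content that sentence silently assumes. The two obstacles you flag are genuine and are not resolved by the paper either: the inner loop is only $K$ ISTA steps, so one must use the per-step sufficient decrease to show the inner prox-gradient residual vanishes rather than treat $\bX^{(i)}$ as an exact minimizer, and identifying the accumulation point $\bgamma^\star$ with the $\bgamma$ generated by $(\bX^\star,\bv^\star)$ requires relating consecutive outer iterates through the coupled updates. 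One caveat: your compactness argument for $\{\bX^{(i_k)}\}$ invokes $\rho>0$ to keep the weights away from zero, whereas the theorem is stated without this restriction (and the experiments use $\rho=0$); covering that case needs a separate argument that the $v_n^{(i)}$ stay bounded away from zero, a point on which the paper is also silent.
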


\begin{proof}
    By Lemma~\ref{lem:F_decrease}, the sequence $\{\cF(\bX^{(i)}, \bv^{(i)})\}$ is strictly decreasing. Moreover, since
    $\Sigma\ge\sigma_e^2 I$, the sequence $\{\bv^{(i)}\}$ is necessarily bounded. Hence, $\cF$ is lower bounded, ensuring the convergence to a local minimum $F^*$.
    Finally, the continuity of \(\cF\) ensures that any accumulation point satisfies the stationarity conditions of
    \(\cL^{II}\), as established in Lemma~\ref{lem:stationary}.
\end{proof}

In practice, monitoring the convergence of \(\cL^{II}\) can be numerically challenging due to the division by \(\gamma_n\), which tends to zero for sparse solutions. As  at each iteration \(w^*(\bv^{(i)})\) is given by:
\begin{equation}
    w^*(\bv^{(i)}) = \bv^{(i)T}\bgamma^{(i-1)} -  \log \left| \bSigma_y(\bgamma^{(i-1)}) \right|,
\end{equation}
\(\cF\) can be stably monitored without divisions by \(\gamma_n\).

To accelerate convergence, ISTA can be replaced by FISTA, which improves the convergence rate and reduces the overall computational cost.

\subsection{Low SNR regime}
The low SNR regime is particularly relevant in MEG applications, where the signal-to-noise ratio is often low due to measurement noise and the small amplitude of cortical activity. In this regime, as proposed in~\cite{hashemi2021unification}, the contribution of the data covariance to the log-determinant term in $\cL^{II}$ becomes negligible. Specifically, we have:
\begin{equation}
    \log |\bSigma_y| = \Tr(\bG\bGamma\bG^T) + \cO(SNR).
\end{equation}
Substituting this approximation into the SBL objective, one can show that the resulting cost function $\cL^{low}$ can be expressed as:
\begin{align}
    \cL^{low}(\bX,\bgamma)                                            &
    = \frac{1}{T\se} \|\bY - \bG\bX\|^2
    + \frac{1}{T}  \sum_{n=1}^N\sum_{t=1}^T \frac{X[n,t]^2}{\gamma_n} &                                                & \nonumber \\
                                                                      & + \sum_{n=1}^N (\|\bG[:,n]\|^2+\rho) \gamma_n.
\end{align}
%The term $\|\bG_n\|^2$ represents the squared norm of the $n$-th column of the leadfield matrix $\bG$, capturing the strength of the forward model contribution for each source. 
Minimizing $\cL^{low}$ with respect to both $\bX$ and $\bgamma$ yields the equivalent minimisation problem in $\bX$:
\begin{align}
    \label{eq:min_lowsnr}
    \min_{\bX} \frac{1}{2} \|\bY - \bG\bX\|^2 +  \se\sqrt{T}\sum_{n=1}^N \sqrt{\|\bG[:,n]\|^2 + \rho}\  \|X[n,:]\|\ ,
\end{align}
which shows that in the low SNR regime, the problem reduces to a weighted $\ell_{21}$-regularized least squares problem. The weights depend explicitly, and only, on the norm of the columns of $\bG$ and the regularization parameter $\rho$.

\section{Numerical Results}
In this section, we numerically compare the standard CHAMPAGNE algorithm against \cref{algo}.
We also compare two different implementations of the $\ell_{21}$-solver in \cref{algo}, a standard FISTA solver and celer~\cite{pmlr-v80-massias18a}, a highly optimized solver leveraging a duality-based approach. It should be noted that celer does not provide a solver for reweighted $\ell_{21}$, which causes our setup to introduce a variable change for that implementation. This may impact speed and result quality for celer when $T > 1$.

The experiments are run on an Ubuntu system equipped with an i7-10750H processor and 16 GB of RAM. All benchmarks are managed using benchopt~\cite{benchopt}. The source code to reproduce our results can be found at \href{https://gitlab.inria.fr/dsechet/sbl\_as\_sparse\_coding}{https://gitlab.inria.fr/dsechet/sbl\_as\_sparse\_coding}.

\subsection{Compressed sensing}
We first compare the algorithms on a simple compressed sensing scenario, with $\bG$ a $300 \times 1000$ column-normalized standard Gaussian random matrix, $90 \%$ sparsity, and $T=1$.
Signals are generated with low noise (average SNR of $25$~dB in sensor space). All three implementations show their best performance for $\rho=0$. Performance is aggregated across $1000$ random samples: for each experiment, both $\bG$ and $\bX$ are redrawn.
Performance over time is shown in \cref{fig:csc_t1}. Our algorithm is significantly faster than CHAMPAGNE and converges to a slightly better solution in terms of SNR.

\begin{figure}[h]
    \centering
    \includegraphics[width=\linewidth]{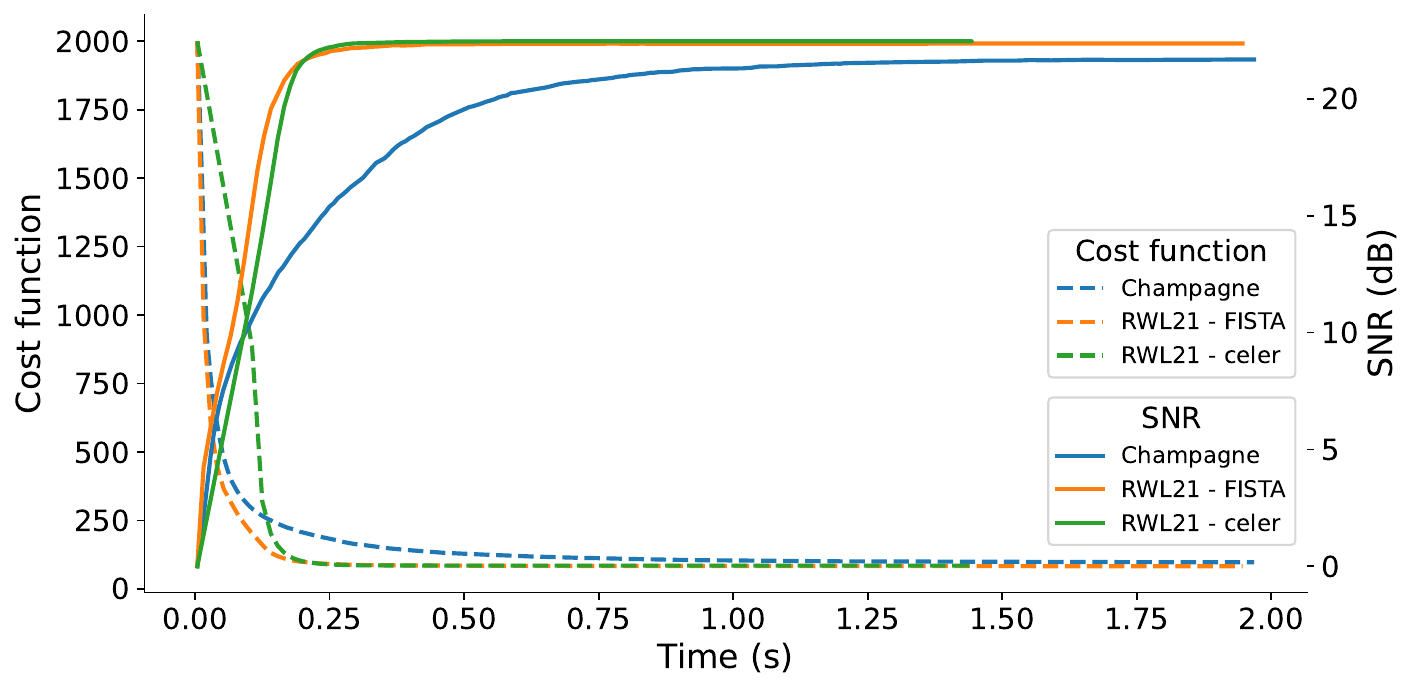}
    \caption{Median SNR and $\cF$  evolution across 1000 compressed sensing problems for $T=1$ and $\rho = 0$}
    \label{fig:csc_t1}
\end{figure}

\subsection{MEG setting}
To evaluate the robustness of our algorithm, we tested it in a realistic MEG inverse problem scenario. MEG inverse
problems usually involve highly sparse solutions in space but are particularly challenging due to their high noise
levels and ill-conditioned sensing matrices.
We use a column-normalized matrix $\bG$ of size $305 \times 1884$, derived from the sample subject in the MNE-python
dataset \cite{GramfortEtAl2013a},
and generate a sparse signal in space, with $0.5 \%$ of active sources. The
active sources are populated for $T=10$ timesteps using a randomized frequency, amplitude, and phase sinusoid. The
noise variance is chosen so the data's SNR is close to $0.3$~dB.
% to emulate the noisy conditions of MEG measures. 

Given the high noise level, we compare the standard SBL algorithm to its low-SNR approximation, where we treat
$\sigma_e$ as a hyperparameter $\sigma_0$, effectively eliminating the need for $\rho$, which we set to zero. In this
regime, the approximate algorithm tends to induce excessive sparsity, often leading to trivial solutions. While
adjusting $\rho$ could mitigate this, our experiments indicate that the optimal weight would typically be negative.  Performance is aggregated across 30 trials, with $\bG$ fixed while $\bX$ and the noise are resampled.

In \cref{fig:snr_03}, we plot the time to convergence of various algorithms, along with the SNR at convergence. Convergence is monitored using the cost function introduced in \cref{eq:min_lowsnr}.
The results highlight a tradeoff between speed and solution quality. While the standard SBL formulations achieve much better solution quality, they converge more slowly than the low-SNR approximation methods.

Celer achieves the fastest convergence in low-SNR mode and also achieves slightly better results than CHAMPAGNE. That tendency is inverted for the standard SBL formulation, with CHAMPAGNE slightly outperforming celer while being faster.
FISTA performs poorly on this problem, converging much slower than the other two implementations, both in SBL and low-SNR mode.
Our choice of using $\sigma_0$ as a hyperparameter instead of directly setting $\sigma_0 = \sigma_e$ in the low-SNR approximation is vindicated, as we can improve the solutions' quality without any significant impact on runtime with an optimal choice of $\sigma_0$.
In \cref{fig:snr_03_evol}, we can see how the SNR evolves over time for both the standard SBL algorithm and the low-SNR
approximation with an optimal $\sigma_0$.

\begin{figure}[h]
    \centering
    \includegraphics[width=\linewidth]{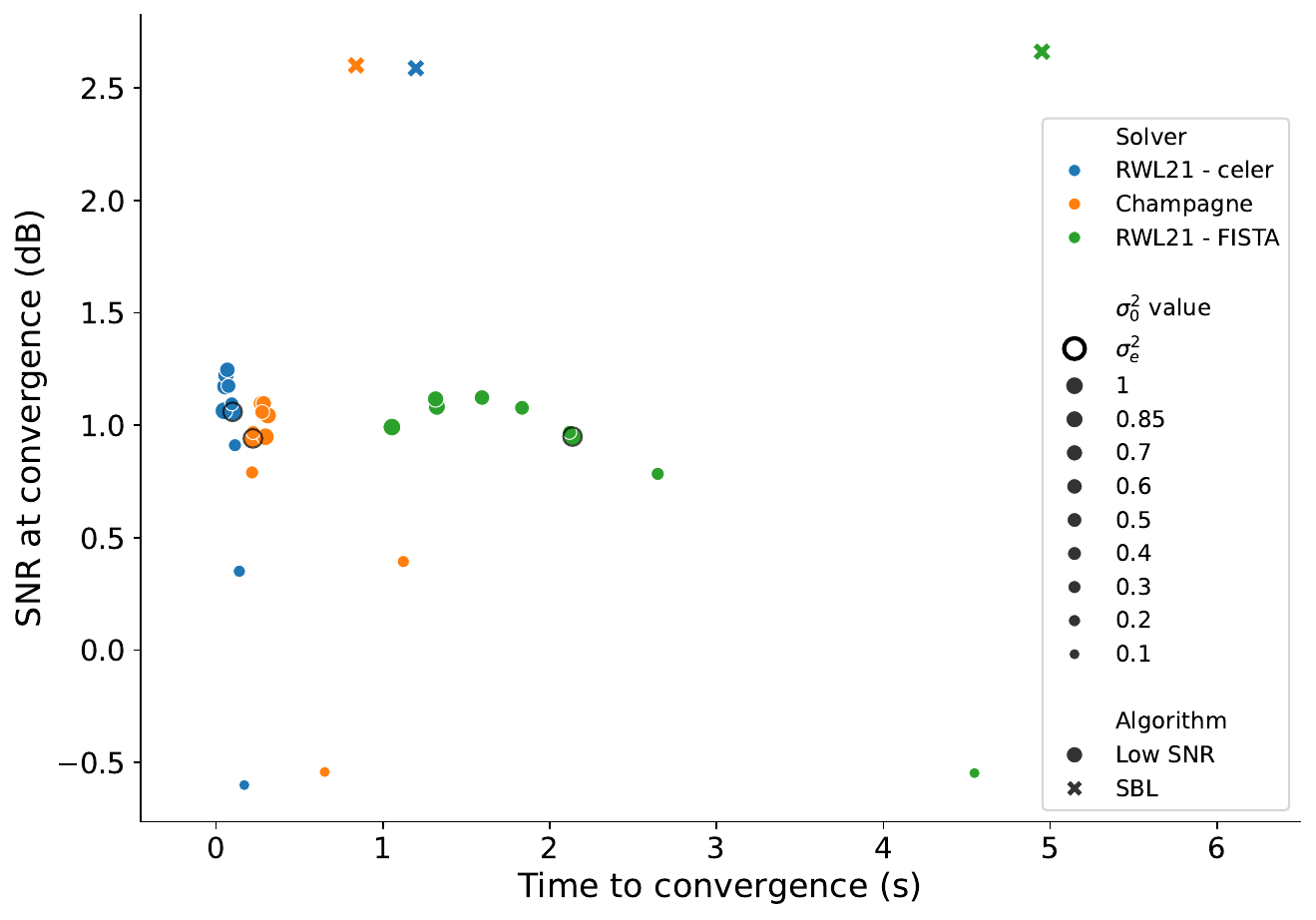}
    \caption{Median SNR and time to convergence accross $30$ MEG problems}
    \label{fig:snr_03}
\end{figure}

\begin{figure}[h]
    \centering
    \includegraphics[width=\linewidth]{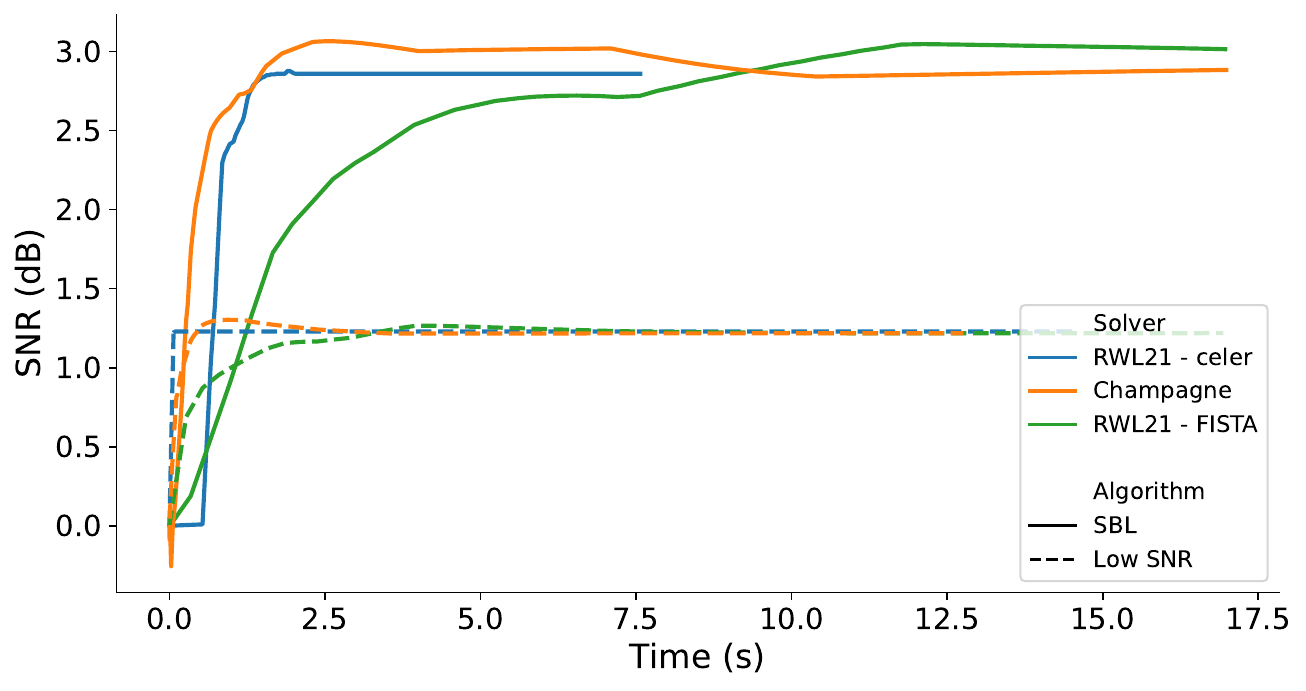}
    \caption{Median SNR evolution across 30 MEG problems. For the low-SNR regime, the best $\sigma_0$ is chosen. It happens to be $\sigma_0^2 = 0.6$ for all solvers.}
    \label{fig:snr_03_evol}
\end{figure}

\section{Conclusion}
We introduced a reweighted sparse coding formulation of the SBL optimization problem, leading to an efficient iterative
algorithm. Our results show that for $T=1$, the proposed approach significantly outperforms CHAMPAGNE regarding
computational efficiency while ensuring exact sparsity. For $T>1$, performance depends on the choice of the iterative
thresholding solver: while FISTA proves suboptimal, using Celer yields competitive results. In the
low-SNR regime, the non-reweighted formulation consistently achieves superior efficiency.
An unexpected observation is that setting $\rho=0$ systematically leads to the best results, which warrants further
investigation. Future work will also focus on analyzing the convergence properties of the iterates by leveraging the
Kurdyka-\L{}ojasiewicz framework.

\newpage

\bibliographystyle{IEEEtran}
% \bibliography{biblio}
% Generated by IEEEtran.bst, version: 1.14 (2015/08/26)

\end{document}